\documentclass[a4paper,12pt,reqno]{article} 

\usepackage{fourier} 
\usepackage[scaled=0.875]{helvet} 

\usepackage{titlesec}
\titlespacing\section{0pt}{30pt plus 4pt minus 2pt}{5pt plus 2pt minus 2pt}
\titlespacing\subsection{0pt}{15pt plus 4pt minus 2pt}{5pt plus 2pt minus 2pt}
\titlespacing\subsubsection{0pt}{12pt plus 4pt minus 2pt}{0pt plus 2pt minus 2pt}

\usepackage{float} 

\usepackage{tabularx}
\newcolumntype{L}[1]{>{\raggedright\arraybackslash}p{#1}}
\newcolumntype{C}[1]{>{\centering\arraybackslash}p{#1}}
\newcolumntype{R}[1]{>{\raggedleft\arraybackslash}p{#1}}

\usepackage{tikz}

\usepackage{setspace}                   
\linespread{1.5}        


\usepackage{enumitem}
\setlist[itemize]{leftmargin=*} 

\usepackage{amsmath,amssymb,amsthm}
\usepackage{wasysym}
\usepackage{indentfirst}
\usepackage{graphicx}
\usepackage[header,title,titletoc]{appendix}
\usepackage{comment}
\usepackage{natbib}             

\usepackage{geometry}
\geometry{left=2.54cm,right=2.54cm,top=2.54cm,bottom=2.54cm,heightrounded}

\usepackage[bookmarks,colorlinks]{hyperref} 
\hypersetup{colorlinks,%
           citecolor=blue,%
           filecolor=black,%
           linkcolor=blue,%
           urlcolor=black}
\usepackage[stable]{footmisc} 

\renewcommand{\thefootnote}{\fnsymbol{footnote}}


\newcommand{\bd}[1]{{\bf #1}}
\newcommand{\ft}[1]{\footnote{#1}}

\usepackage{soul} 

\newcommand{\email}[1]{\href{mailto:#1}{\nolinkurl{#1}}}

\newtheorem{defi}{Definition}
\newtheorem{prop}{Proposition}
\newtheorem{lem}{Lemma}
\newtheorem{cor}{Corollary}

\newtheorem{ex}{Example}

\theoremstyle{definition}

\begin{document}

\title{On the Asymptotic Performance of Affirmative Actions in School Choice}

\author{Di Feng\footnotemark[1] \; and \; Yun Liu\footnotemark[2]}

\date{December 10, 2022} 
 
\footnotetext[1]{Faculty of Business and Economics, University of Lausanne, 1015 Lausanne, Switzerland. Email: di.feng@unil.ch.}

\footnotetext[2]{Corresponding author. Center for Economic Research, Shandong University, Jinan 250100, China. Email: yunliucer@sdu.edu.cn.}

\maketitle
\setcounter{footnote}{0}
\renewcommand\thefootnote{\arabic{footnote}}

\thispagestyle{empty}

\begin{abstract} 

This paper analyzes the asymptotic performance of two popular affirmative action policies, \emph{majority quota} and \emph{minority reserve}, under the immediate acceptance mechanism (IAM) and the top trading cycles mechanism (TTCM) in the contest of school choice. The matching outcomes of these two affirmative actions are asymptotically equivalent under the IAM when all students are sincere. Given the possible preference manipulations under the IAM, we characterize the asymptotically equivalent sets of Nash equilibrium outcomes of the IAM with these two affirmative actions. However, these two affirmative actions induce different matching outcomes under the TTCM with non-negligible probability even in large markets.


\medskip

\noindent\bd{JEL Classification Number:} C78; D78; I20

\medskip

\noindent\bd{Keywords:} school choice; immediate acceptance mechanism; top trading cycles mechanism; affirmative actions; large market

\end{abstract}

\maketitle
\thispagestyle{empty}

\section{Introduction}

Affirmative action policies, albeit controversial, attempt giving disadvantaged social groups preferential treatments to improve their socioeconomic status and representation in our societies. In the context of public school choice, many school districts in the United States and around the world often impose affirmative action policies to maintain the racial, ethnic and socioeconomic diversity at schools. The quota-based affirmative action (\emph{majority quota}, henceforth) and the reserve-based affirmative action (\emph{minority reserve}, henceforth) are two popular policy designs in practice. \cite{AS03} formalize the majority quota policy in school choice, which sets a maximum number less than the school's capacity to  \emph{majority} students and leaves the difference to \emph{minority} students.\ft{We term the intended beneficiaries from affirmative action policies as minority students, and all the rest students as majority students; in other words, the distinction between the majority and the minority students does not depend on race, ethnicity, or other single socioeconomic status.} The minority reserve policy proposed by \cite{HYY13}, on the other hand, gives higher priority to minority students up to the point that all reserved seats have been assigned to minorities.

The \emph{immediate acceptance mechanism} (IAM, henceforth; also known as the Boston mechanism) and the \emph{top trading cycles mechanism} (TTCM, henceforth) are two common matching mechanisms for school choice, which are currently being used in many school districts in the U.S. (e.g., Boston, Columbus, Minneapolis, and Seattle), and other OECD countries \citep{M12,CG18}. This paper extends the performance comparison of the majority quota policy and its minority reserve counterpart under the IAM and the TTCM to large school choice markets (i.e., a sequence of random matching markets of different sizes).

We first analyze the asymptotic performance of the majority quota and its minority reserve counterpart under the IAM. Based on the observation that it is very unlikely for any two different students to list the same school with nonzero reserved seats in their preference orders when the market is sufficiently large, our Proposition \ref{prop:iamtrue} implies that the matching outcomes of the IAM with these two competing affirmative action policies are asymptotically equivalent when all students are sincere. However, since students have incentives to manipulate the IAM through strategically reporting their preference orders, the majority quota and its minority reserve counterpart may no longer generate an identical matching outcome in large markets, as students could have distinct incentives to manipulate their preferences under these two affirmative actions. Proposition \ref{prop:iamnash} shows that although the IAM is open to preference manipulations, these two affirmative actions are most likely to produce the same set of Nash equilibrium outcomes under the IAM when the market becomes sufficiently large.

Proposition \ref{prop:ttc} presents our main argument on the asymptotic performance of the majority quota and its minority reserve counterpart under the TTCM in large school choice markets. It indicates that these two affirmative actions produce different matching outcomes under the TTCM with non-negligible probability, even if the number of reserved seats grows at a slower rate of $O(n^a)$ in a sequence of random markets, where $0 \le a < 1/2$ and $n$ is the number of schools in a given random market (see Definition \ref{def:regular}). As the purpose of imposing affirmative actions in school choice is to improve the welfare of minority students, the outcome non-equivalence between these two affirmative actions under the TTCM certainly results in an ambiguous Pareto dominance relationship for minorities in large matching markets (Corollary \ref{cor:welfare}). In addition, given the possibly substantial political, administrative and cognitive costs of transferring from one affirmative action policy to the other, an immediate policy implication of our results is that the IAM is more cost-effective compared to the TTCM in large school choice markets with affirmative actions, in the sense that it is unnecessary to identify the different welfare effects of these two affirmative actions under the IAM when all students are playing their Nash equilibrium strategies. 

The literature on large matching markets has been growing rapidly in recent years. Most studies nevertheless indicate that many existing impossibility results, ranging from incentives to existence and efficiency in finite matching markets, disappear if we admit an asymptotic variant of these properties in large market environments. To our knowledge, only a few studies have demonstrated that the large market approach does not eliminate all the distinct properties of different matching mechanisms.\ft{\cite{KP09} present an example to illustrate that students still have incentives to manipulate the IAM in large school choice markets. \cite{HKN16} show that neither the TTCM nor the IAM asymptotically respect improvements of school quality (i.e., a school matches with a set of more desirable students if it becomes more preferred by students). \cite{CT19} indicate that the inefficiency of the \emph{student optimal stable mechanism} and instability of the TTCM remain significant in large markets when agents have correlated preferences.} Given the fading of many impossibility results in large markets, some researchers have criticized using the large market approach in matching and market design problems, in the sense that the asymptotic analysis of matching mechanisms may be too ``permissive'' to make market design irrelevant.\ft{See the discussions in Section 3.4 of \cite{K15}.} The current paper thus supports the validity of the large market analytic approach, as it still enables us to capture the subtle difference between the matching mechanisms that can \emph{asymptotically} satisfy some desirable proprieties from those that cannot.

\section{Model}                     %
\label{sec:model}                   %

\subsection{School Choice with Affirmative Actions}
\label{ssec:predef}

Let $S$ and $C$ be two finite sets of students and schools, $|S| \ge 2$. There are two types of students, \emph{majority} and \emph{minority}. $S$ is partitioned into two subsets of students based on their types. Denote $S^M$ the set of majority students, and $S^m$ the set of minority students, $S = S^M \cup S^m$ and $S^M \cap S^m = \emptyset$. Each student $s \in S$ has a strict \emph{preference} order $P_s$ over the set of schools and being unmatched (denoted  by $s$). All students prefer to be matched with some school instead of herself, $c \,P_s\, s$, for all $s \in S$. Each school $c \in C$ has a total capacity of $q_c$ seats, $q_c \ge 1$, and a strict \emph{priority} order $\succ$ over the set of students which is complete, transitive, and antisymmetric. Student $s$ is unacceptable by a school if $e \succ_c s$, where $e$ represents an empty seat in school $c$.

For each school $c$ with \emph{majority quota} affirmative action policy, it cannot admit more majority students than its type-specific majority quota $q_c^M \le q_c$, for all $c \in C$. Accordingly, the \emph{minority reserve} policy gives priority to the minority applicants of school $c$ up to its \emph{minority reserve} $r_c^m \le q_c$, $\forall c \in C$, and allows $c$ to accept majority students up to its capacity $q_c$ if there are not enough minority applicants to fill the reserves.

A school choice \emph{market} with affirmative actions is a tuple $\Gamma = (S, C, P, \succ, (q^M, r^m))$, where $P = (P_s)_{s \in S}$, $\succ = (\succ)_{c \in C}$. When comparing the effects of a majority quota policy with its minority reserve counterpart in a market $\Gamma$, we assume $\Gamma$ is either with only majority quota or with only minority reserve, such that $r_c^m + q_c^M =q_c$, $\forall c \in C$, with $q^M = (q_c^M)_{c \in C}$, $r^m = (r_c^m)_{c \in C}$, and $q = (q_c)_{c \in C}$. 

A \emph{matching} $\mu$ is a mapping from $S \cup C$ to the subsets of $S \cup C$  in market $\Gamma$ such that, for all $s \in S$ and $c \in C$: (i) $\mu(s) \in C \cup \{ s\}$; (ii) $\mu(s) = c$ if and only if $s \in \mu(c)$; (iii) $\mu(c) \subseteq S$ and $|\mu(c)| \le q_c$; (iv) $|\mu(c) \cap S^M| \le q_c^M$. That is, a matching specifies the school where each student is assigned to or matched with herself, and the set of students assigned to each school; no school admits more students than its capacity, and no school admits more majority students than its majority quota. 

A matching $\mu$ is \emph{blocked} by a pair of student $s$ and school $c$ with majority quota, if $c P_s \mu(s)$ and either $|\mu(c)| < q_c$ and $s$ is acceptable to $c$, or:
\begin{enumerate}

\item[(i)] $s \in S^m$, $s \succ_c s'$, for some $s' \in \mu(c)$;

\item[(ii)] $s \in S^M$ and $|\mu(c) \cap S^M| < q_c^M$, $s \succ_c s'$, for some $s' \in \mu(c)$;

\item[(iii)] $s \in S^M$ and $|\mu(c) \cap S^M| = q_c^M$, $s \succ_c s'$, for some $s' \in \mu(c) \cap S^M$. 
\end{enumerate}

A matching $\mu$ is \emph{Q-stable}, if $\mu(s) \,P_s\, s$ for all $s \in S$, and has no blocking pair in $\Gamma$ with majority quota.

Accordingly, a matching $\mu$ is \emph{blocked} by a pair of student $s$ and school $c$ with minority reserve, if $c P_s \mu(s)$ and either $|\mu(c)| < q_c$ and $s$ is acceptable to $c$, or:
\begin{enumerate}

\item[(i)] $s \in S^m$, $s \succ_c s'$, for some $s' \in \mu(c)$;

\item[(ii)] $s \in S^M$ and $|\mu(c) \cap S^m| > r_c^m$, $s \succ_c s'$, for some $s' \in \mu(c)$;

\item[(iii)] $s \in S^M$ and $|\mu(c) \cap S^m| \le r_c^m$, $s \succ_c s'$, for some $s' \in \mu(c) \cap S^M$. 
\end{enumerate}

A matching $\mu$ is \emph{R-stable}, if $\mu(s) \,P_s\, s$ for all $s \in S$, and has no blocking pair in $\Gamma$ with minority reserve. 

As the purpose of imposing affirmative actions in school choice markets is to improve the matching outcomes (i.e., welfare) of minority students, we need some \emph{type-specific} criteria to evaluate the welfare effects of affirmative actions on minority students. Given two matchings $\mu$ and $\mu'$, $\mu$ \emph{Pareto dominates $\mu'$ for minorities} if (i) $\mu(s) P_s \mu'(s)$ for at least one $s \in S^m$, and (ii) $\mu(s) R_s \mu'(s)$ for all $s \in S^m$, where $R_s$ represents two matched outcomes that are equally good for $s$.

A matching \emph{mechanism} $f$ is a function that produces a matching $f(\Gamma)$ for each market $\Gamma$. A mechanism $f$ is \emph{strategy-proof} if for each student $s \in S$ and for any $P$, there exists no $P'_s$ such that $\mu(P'_s, P_{-s}) P_s \mu(P)$, where $P_{-s}=(P_i)_{i\in S\backslash s}$; that is, if a mechanism is strategy-proof, each student finds it optimal to report her preferences truthfully regardless of the preferences of other students. Finally, given two mechanisms $f$ and $f'$, we say \emph{$f$ Pareto dominates $f'$ for minorities} if for all $\Gamma$, either $f'(\Gamma) = f(\Gamma)$ for all minorities or $f'(\Gamma)$ Pareto dominates $f(\Gamma)$ for minorities.

\subsection{Large Markets}        %
\label{ssec:large}                     %

A \emph{random market} is a tuple $\tilde{\Gamma} = ((S^M, S^m), C, \succ, (q^M, r^m), k, (\mathcal{A},\mathcal{B}))$,  where $k$ is a positive integer, $\mathcal{A} = (\alpha_c)_{c \in C}$ and $\mathcal{B} = (\beta_c)_{c \in C}$ are the respective probability distributions on $C$, with $\alpha_c, \beta_c > 0$ for each $c \in C$. We assume that $\mathcal{A}$ for majorities to be different from $\mathcal{B}$ for minorities to reflect their distinct favors for schools. 

A sequence of \emph{random markets} is denoted by $(\tilde{\Gamma}^1,\tilde{\Gamma}^2,\dots)$, where $\tilde{\Gamma}^n = ((S^{M,n}, S^{m,n}), C^n, \succ_n, (q^{M,n}, r^{m,n}), k^n, (\mathcal{A}^n, \mathcal{B}^n))$ is a random market of size $n$, with $|C^n| = n$ as the number of schools, $|r^{m,n}|$ the number of seats reserved for minorities, and $|S^n| = |S^{M,n}| + |S^{m,n}|$ as the number of students in market $\tilde{\Gamma}^n$.

Each random market induces a market by randomly generated preference orders of each student $s$ according to the following procedure introduced by \cite{IM05}:
\begin{itemize}
\item[] \textbf{Step 1:} Select a school independently from the distribution $\mathcal{A}$ (resp. $\mathcal{B}$). List this school as the top ranked school of a majority student $s \in S^M$ (resp. minority student $s \in S^m$).


\item[] \textbf{Step $\mathbf{l \le k}$:}  Select a school independently from $\mathcal{A}$ (resp. $\mathcal{B}$) which has not been drawn from steps 1 to step $l-1$. List this school as the $l^{th}$ most preferred school of a majority student $s \in S^M$ (resp. minority student $s \in S^m$).

\end{itemize}

Each majority (resp. minority) student finds these $k$ schools acceptable, and only lists these $k$ schools in her preference order. Let $\tilde{P}_s^n$ be the truthful preference order of student $s$ generated according to the preceding procedure, and  $\tilde{P}^n= (\tilde{P}_s^n)_{s \in S^n}$ be the profile of truthful preference orders. We introduce the following regularity conditions to guarantee the convergence of the random markets sequence.

\begin{defi} \label{def:regular}

Consider majority quotas $q^{M.n}$ and minority reserves $r^{m,n}$ such that $r^{m,n} + q^{M.n} =q^n$. A sequence of random markets $(\tilde{\Gamma}^{1},\tilde{\Gamma}^{2},\dots)$ is \emph{regular}, if there exist $a \in [0, \frac{1}{2})$, $\lambda, \kappa, \theta > 0$, $r \ge 1$, and positive integers $k$ and $\bar{q}$, such that for all $n$:
\begin{enumerate}[leftmargin=*]
\item[(1)] $k^n \le k$;
\item[(2)] $q_c^n \le \bar{q}$ for all $c \in C^n$;
\item[(3)] $|S^n| \le \lambda n$, $\sum_{c \in C} q_c - |S^n| \ge \kappa n$;
\item[(4)] $|r^{m,n}| \le \theta n^a$;
\item[(5)] $\frac{\alpha_c}{\alpha_{c'}} \in [\frac{1}{r}, r]$, $\frac{\beta_c}{\beta_{c'}} \in [\frac{1}{r}, r]$, for all $c, c' \in C^n$;
\item[(6)] $\alpha_c=0$, for all $c \in C^n$ with $q_c^{M,n} = 0$.
\end{enumerate}
\end{defi}

Condition (1) and (2) assume that the length of students' preference orders and the capacity of each school are bounded across schools and markets. Condition (3) requires that the number of students does not grow much faster than the number of schools, while there is an excess supply of school capacities to accommodate all students.\ft{Note that we do not distinguish the growth rate between majority and minority students, as minority students are generically treated as the intended beneficial student groups from affirmative action policies rather than race or other single socioeconomic status; in other words, the number of minority students is not necessarily less than majorities.} Condition (4) requires that the number of seats reserved for minority students grows at a slower rate of $O(n^a)$, where $a \in [0, \frac{1}{2})$. Condition (5) requires that the popularity of different schools, as measured by the probability of being selected by students from $\mathcal{A}$ for majorities and $\mathcal{B}$ for minorities, does not vary too much. Condition (6) requires that a majority student will not select a school that can only accept minority students (i.e., with majority quota $q_c^{M,n} = 0$), as these two affirmative actions trivially induce disparate matching outcomes in any arbitrarily large markets when a majority student applies to a school with zero majority quota.\ft{Most of these regularity conditions have been introduced in the literature of large matching markets; see \cite{KPR13}, \cite{HKN16}, and \cite{Liu22da} for more detailed illustrations.}

We formally define the \emph{asymptotic} outcome equivalence condition of these two affirmative actions in a sequence of random markets of different sizes as follows.

\begin{defi} \label{def:large}
For any random market $\tilde{\Gamma}$, let $\eta_c(\tilde{\Gamma};f,f')$ be probability that school $c \in C^n$ matched with different sets of students which induces $f(\tilde{\Gamma}) \ne f'(\tilde{\Gamma})$. We say two matching mechanisms are \emph{outcome equivalent in large markets}, if for any sequence of random markets $(\tilde{\Gamma}^1,\tilde{\Gamma}^2,\dots)$ that is regular, $\max_{c \in C^n} \eta_c(\tilde{\Gamma}^n; f,f') \to 0$, as $n \to \infty$; that is, for any $\varepsilon > 0$, there exists an integer $m$ such that for any random market $\tilde{\Gamma}^n$ in the sequence with $n>m$ and any $c \in C^n$, we have $\max_{c \in C^n} \eta_c(\tilde{\Gamma}^n;f,f') < \varepsilon$.

\end{defi}

\section{Asymptotic Equilibrium Outcomes Equivalence under the IAM}    
\label{ssec:nash}                          

For each market $\Gamma =(S, C, P, \succ, (q^M, r^m))$, \cite{AS16} adapt the immediate acceptance mechanism (IAM) to school choice with affirmative actions. The \emph{immediate acceptance mechanism with affirmative actions} algorithm runs as follows: 
\begin{itemize}

\item[] \textbf{Step 1}: Each student applies to her most preferred acceptable school (call it school $c$). The school $c$ first considers minority applicants and permanently accepts them up to its minority reserve $r_c^m$ one at a time following its priority order, if $r_c^m > 0$. School $c$ then considers all the applicants who are yet to be accepted, and one at a time following its priority order. It permanently accepts as many students as up to the remaining total capacity while not admitting more majority students than $q_c^M$. The rest (if any) are rejected. 


\item[] \textbf{Step l}: Each student $s$ who was rejected at Step $(l-1)$ applies to her next preferred acceptable choice (call it school $c$, if any). If school $c$ still has an available seat, it first considers minority applicants and permanently accepts them up to its remaining minority reserve one at a time following its priority order, if $r_c^m > 0$. School $c$ then considers all the applicants who are yet to be accepted, and one at a time following its priority order. It permanently accepts as many students as up to the remaining total capacity while not admitting more majority students than its remaining majority quota. The rest (if any) are rejected.

\end{itemize}

The algorithm terminates either when every student is matched to a school or every unmatched student has been rejected by all acceptable schools, which always terminates in a finite number of steps. For a market $\Gamma$, if $r_c^m = 0$, $\forall c \in C$, i.e., a market with only majority quota, the above algorithm reduces to the \emph{IAM with majority quota} (IAM-Q, henceforth). Also, if $q_c^M = q_c$, $\forall c \in C$, i.e., a market with only minority reserve, the above algorithm reduces to the \emph{IAM with minority reserve} (IAM-R, henceforth). 



Under the same preference generation procedure and the regularity conditions defined in Section \ref{ssec:large}, \cite{Liu22da} obverses that the two affirmative actions will result in different matching outcomes under the \emph{student optimal stable mechanism} (SOSM, henceforth) only when some schools have excessive majority applicants and insufficient number of minority applicants; and shows that it is very unlikely for any two different students to list the same school with nonzero reserved seats in their preference orders with either of these two affirmative actions when the market contains sufficiently many schools. Since the underlying matching mechanisms will not affect how students form their preference orders, while students are permanently accepted in each step of the IAM, \cite{Liu22da}'s analysis of the asymptotic performance of these two affirmative actions under the SOSM is clearly valid under the IAM when all students truthfully report their preference.

\begin{prop} \label{prop:iamtrue}
The IAM-Q and its corresponding IAM-R are outcome equivalence in large markets when all students are sincere.
\end{prop}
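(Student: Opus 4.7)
The plan is to lift the probabilistic argument used by \cite{Liu22da} for the student optimal stable mechanism up to the IAM. First, I would record the basic reduction: IAM-Q and IAM-R act identically on any school $c$ with $r_c^m = 0$ (equivalently $q_c^M = q_c$), because the minority-reserve branch of the algorithm is then vacuous and the majority cap coincides with the total capacity. Hence $\eta_c = 0$ for every such school, and $\max_{c \in C^n} \eta_c$ is driven exclusively by those schools carrying positive reserved seats.

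Second, I would invoke the key estimate flagged in the text of \cite{Liu22da}: in any regular random market the probability that two distinct students both list a school with $r_c^m > 0$ in their length-$\le k$ preference orders vanishes uniformly in $c$ as $n \to \infty$. The ingredients are supplied directly by Definition \ref{def:regular}: (4) caps the number of reserved-seat schools by $|r^{m,n}| \le \theta n^a$ with $a < 1/2$; (1), (3) and (5) guarantee that any fixed school appears in a given student's preference list with probability only of order $1/n$; and independence across students then delivers the required concentration.

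Third, I would argue that on this high-probability event the two mechanisms produce the same matching, via a step-by-step induction on the IAM round. If no two students share a reserved-seat school in their listed preferences, then at every step each $c$ with $r_c^m > 0$ receives at most one fresh applicant, and a lone acceptable applicant must be permanently accepted (capacity permitting) under either mechanism: IAM-Q simply admits him since the majority cap is not exceeded by a single person, and IAM-R admits him either through the reserve branch (if minority) or through the subsequent fill-up branch (if majority). Combined with the algorithm-level identity of IAM-Q and IAM-R at schools without reserves, this forces the two runs to coincide round by round, so each $\eta_c$ is bounded above by the vanishing collision probability.

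The hard part will be the \emph{uniform}-in-$c$ control of collisions rather than merely an average statement. A naive union bound over the $O(n^a)$ reserved-seat schools combined with the $O(n^2)$ pairs of students and the $O(1/n^2)$ per-pair collision probability only yields an $O(n^a)$ expected collision count, which is too crude whenever $a$ approaches $1/2$; the sharper per-school estimate imported from \cite{Liu22da} is what closes that gap, and once it is in place the remaining identification of the IAM-Q and IAM-R runs is a short combinatorial verification.
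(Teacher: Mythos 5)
Your proposal takes essentially the same route as the paper: the paper's proof consists of citing \cite{Liu22da}'s estimate (his Expression (A.3)) that the probability of any two distinct students listing the same school with nonzero reserved seats vanishes, together with the remark in the text preceding the proposition that permanent acceptances under the IAM and mechanism-independent preference formation let the SOSM argument transfer, and your reduction to reserve schools, imported collision estimate, and round-by-round coincidence argument are a more detailed reconstruction of exactly that. The only point to patch is in your third step: a lone \emph{majority} applicant is admitted under IAM-Q only if $q_c^M \ge 1$, so there you must invoke condition (6) of Definition \ref{def:regular} (majority students never list a school with $q_c^{M} = 0$) rather than the assertion that ``the majority cap is not exceeded by a single person.''
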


\begin{proof}

The result follows \cite{Liu22da}'s argument that the probability for any two different students (either majority or minority) to list the same school $c \in C^n$ with nonzero reserved seats in their preference orders converges to zero, when $n \to \infty$. See Expression (A.3) and the succeeding arguments in his Proof of Proposition 2 for details.
\end{proof}


However, it is well-known that the IAM is open to preference manipulations. Once students become strategic in their preference submissions, the majority quota and its minority reserve counterpart may no longer generate an identical matching outcome, because students may have distinct incentives to misreport their preferences under the IAM-Q and its IAM-R counterpart (see Example \ref{ex1} in Appendix \ref{app:example}).

To analyze the asymptotic performance of these two affirmative actions under the manipulable IAM, we first define the IAM-Q and its corresponding IAM-R as a \emph{preference revelation game} in a sequence of random markets. Formally, given a regular random market $\tilde{\Gamma}^n$ of size $n$, $n=1,2,\dots$, a mechanism $f$ and students' corresponding truthful preference profile $\tilde{P}^n=(\tilde{P}^n_s)_{s\in S^n}$, denote $G_f(\tilde{\Gamma}^n)=(\mathcal{P}^n , \tilde{P}^n, f)$ the preference revelation game induced by $f$, where $\mathcal{P}^n$ is the \emph{strategy space} of each student (i.e., all the possible stated preferences over schools), and $f$ is the \emph{outcome function} in which each student evaluates her assignments according to $\tilde{P}^n$.

\begin{defi} \label{def:nash}
A strategy profile $P^* \in \Pi_{s\in S^n} \mathcal{P}^n$ is a \emph{Nash equilibrium} of $G_f(\tilde{\Gamma}^n)$, if for each $s\in S^n$, there is no strategy $P'_s \in \mathcal{P}^n$ such that $ f_s( P'_s,P^*_{-s} ) \mathbin{\tilde{P}^n_s} f_s(P^*)$, where $P^*_{-s}=(P_i^*)_{i\in S^n\backslash s}$. Also, given a Nash equilibrium $P^*$, its corresponding Nash equilibrium outcome is $f(P^*)$.
\end{defi}

For any regular random market $\tilde{\Gamma}^n$ in the sequence of markets $(\tilde{\Gamma}^1,\tilde{\Gamma}^2,\dots)$, let $\tilde{P}_s^n$ be the truthful preference order of student $s$ generated according to the procedure defined in Section \ref{ssec:large}, and $\xi^q(\tilde{P}^n)$ (resp. $\xi^r(\tilde{P}^n)$) be the set of Q-stable (resp. R-stable) matchings under the profile of truthful preferences $\tilde{P}^n$ with majority quota (resp. minority reserve), where $\tilde{P}^n= (\tilde{P}_s^n)_{s \in S^n}$. 


\begin{lem} 
The probability that $\xi^q(\tilde{P}^n) = \xi^r(\tilde{P}^n)$ converges to one, as $n \to \infty$.

\end{lem}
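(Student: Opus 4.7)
The plan is to mirror the proof of Proposition \ref{prop:iamtrue}: I would reduce the event $\{\xi^q(\tilde{P}^n) = \xi^r(\tilde{P}^n)\}$ to a high-probability event about student-preference overlap at reserved schools, then invoke the estimate of \cite{Liu22da} that was already used there. A direct case analysis shows that at any school $c$ with $r_c^m = 0$ (equivalently $q_c^M = q_c$), the blocking definitions agree literally: case (i) is identical; in case (ii) one has $|\mu(c) \cap S^M| < q_c^M$ if and only if $|\mu(c) \cap S^m| > r_c^m$ once $|\mu(c)| = q_c = q_c^M + r_c^m$; and in case (iii) both sides collapse to $\mu(c) \subseteq S^M$ with $s$ outranking some majority occupant. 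Hence no discrepancy between Q- and R-stability can arise at non-reserved schools.

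Next, let $E^n$ denote the event that no school with $r_c^{m,n} > 0$ appears in the preference list of two distinct students under $\tilde{P}^n$. On $E^n$, every such $c$ has at most one potential applicant, so $|\mu(c)| \le 1$ in any feasible matching $\mu$; Q-feasibility $|\mu(c) \cap S^M| \le q_c^{M,n}$ is then automatic (either $q_c^{M,n} \ge 1$, or $q_c^{M,n}=0$ and condition (6) of Definition \ref{def:regular} forces the unique lister to be a minority). Blocking at such $c$ also degenerates: the only candidate blocker is the unique lister $s^{\ast}$, and if $\mu(s^{\ast}) \ne c$ the free-seat clause $|\mu(c)|<q_c$ triggers identically under both definitions, whereas if $\mu(s^{\ast}) = c$ no pair involving $c$ can block. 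Checking the inclusions $\xi^q \subseteq \xi^r$ and $\xi^r \subseteq \xi^q$ separately—using in the latter direction that R-feasibility is weaker than Q-feasibility—one obtains $\xi^q(\tilde{P}^n) = \xi^r(\tilde{P}^n)$ on $E^n$.

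Finally, I would import the probability calculation from Expression (A.3) and the succeeding estimates in the proof of Proposition 2 of \cite{Liu22da}—the same step invoked in Proposition \ref{prop:iamtrue}—to conclude that, under the regularity conditions of Definition \ref{def:regular} (in particular the sublinear growth $|r^{m,n}| \le \theta n^a$ with $a \in [0,1/2)$ together with the bounded popularity ratios in condition (5)), $P(E^n) \to 1$ as $n \to \infty$; this yields $P(\xi^q(\tilde{P}^n) = \xi^r(\tilde{P}^n)) \ge P(E^n) \to 1$. The delicate step, I expect, is the reserved-school analysis in the middle paragraph: one has to simultaneously verify Q-feasibility of an R-stable matching, rule out blocking pairs in all three cases (i)-(iii), and dispose of the degenerate subcase $q_c^{M,n}=0$ via condition (6); without all three checks the stable sets could in principle still differ at a reserved school where the unique lister is matched elsewhere.
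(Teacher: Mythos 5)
Your proof is correct, and it rests on the same probabilistic ingredient as the paper's: the estimate from \cite{Liu22da} (the one already invoked for Proposition \ref{prop:iamtrue}) that, with probability tending to one, no school with nonzero reserve is listed by two distinct students. The decomposition, however, is genuinely different. The paper argues inclusion by inclusion: it asserts that $\xi^r(\tilde{P}^n) \subseteq \xi^q(\tilde{P}^n)$ holds deterministically in every market (any Q-blocking pair is also an R-blocking pair), and then shows that the reverse inclusion can fail only in the ``excessive majority applicants, vacant reserved seat'' scenario, which requires two students to list the same reserved school and hence vanishes in probability. You instead condition once on the event $E^n$ and verify the full set equality there, splitting schools by whether they carry a reserve. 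Your route buys something real: the paper's ``deterministic'' direction is in fact not deterministic, because an R-stable matching may place more than $q_c^M$ majority students at a reserved school and so fail requirement (iv) of the definition of a matching in the quota market — the paper's part (i) addresses blocking but is silent on this feasibility issue (which, like blocking discrepancies, can only arise when two students list a reserved school, so the lemma itself is unharmed). Your explicit check that R-stable matchings are Q-feasible on $E^n$, using condition (6) of Definition \ref{def:regular} for the $q_c^{M,n}=0$ subcase, closes exactly this hole. One small imprecision on your side: at non-reserved schools, Q-case (iii) and R-case (iii) do not literally ``collapse'' to the same condition (Q-(iii) additionally forces $|\mu(c)| = q_c$, since $|\mu(c) \cap S^M| = q_c^M = q_c$); what is true, and what your argument actually needs, is that the union of cases (ii) and (iii) coincides under the two definitions, which your own observation about the full-capacity case already delivers.
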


\begin{proof}
(i) $\xi^r(\tilde{P}^n) \subseteq \xi^q(\tilde{P}^n)$. ~Given a regular random market $\tilde{\Gamma}^n$, let $\mu \notin \xi^q(\tilde{P}^n)$ be a non R-stable matching in it, i.e., $\mu$ is blocked by a pair of student and school $(s,c) \in (S^n,C^n)$ when $\tilde{\Gamma}^n$ has the majority quota $q^{M,n}$. By the definition of R-stability, $(s,c)$ also blocks $\mu$ in $\tilde{\Gamma}^n$ with the corresponding minority reserve $r^{m,n} = q^n- q^{M,n}$. Thus, we have $\mu \notin \xi^r(\tilde{P}^n)$ in $\tilde{\Gamma}^n$.

\smallskip

(ii) $\xi^q(\tilde{P}^n) \subseteq \xi^r(\tilde{P}^n)$. ~Let $\mu \in \xi^q(\tilde{P}^n)$ be a Q-stable matching in a given regular random market $\tilde{\Gamma}^n$ of size $n$. We demonstrate that $\mu$ is \emph{asymptotically} R-stable when the size of the market grows up; that is, the probability that $\mu \in \xi^r(\tilde{P}^n)$ converges to one in the sequence of markets $(\tilde{\Gamma}^1,\tilde{\Gamma}^2,\dots)$, as $n \to \infty$. 

As a majority student will never list a school $c$ with $q_c^M = 0$ (Condition (6) of Definition \ref{def:regular}), a Q-stable matching $\mu$ in $\tilde{\Gamma}^n$ with majority quota $q^{M,n}$ is blocked by a pair of $(s,c)$ in $\tilde{\Gamma}^n$ with the corresponding minority reserve $r^{m,n}$, can only occur when $c \,\tilde{P}_s^n\, \mu(s)$, $|\mu(c) \cap S^M| = q_c^M$, $|\mu(c)| < q_c$, and $s' \succ_c s$, for all $s' \in \mu(c) \cap S^{M,n}$ and $s \in S^{M,n} \backslash \mu(c)$; that is, school $c$ has excessive majority applicants and insufficient number of minority applicants in $\tilde{\Gamma}^n$. Recall Proposition \ref{prop:iamtrue}, we know that it is very unlikely for any two distinct students (i.e., $s$ and $s'$ here) to list the same school $c$ with nonzero reserved seats in their preference orders when $n$ becomes sufficiently large. This implies that the probability for any pair $(s,c) \in (S^n,C^n)$ forming a blocking pair in $\tilde{\Gamma}^n$ with $r^{m,n}$ but not in $\tilde{\Gamma}^n$ with the corresponding $q^{M,n}$ converges to zero, as $n \to \infty$. Thus, we have the probability that $\mu \in \xi^r(\tilde{P}^n)$ converges to one, as $n \to \infty$.
\end{proof}

\begin{lem} \label{lem3}

(1) The set of Nash equilibrium outcomes of the IAM-Q is equal to $\xi^q(\tilde{P}^n)$in each $\tilde{\Gamma}^n$, $n=1,2,\dots$.

\noindent (2) The set of Nash equilibrium outcomes of the IAM-R is equal to $\xi^r(\tilde{P}^n)$ in each $\tilde{\Gamma}^n$, $n=1,2,\dots$.

\end{lem}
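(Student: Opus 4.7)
\textbf{Proof proposal for Lemma \ref{lem3}.} The plan is to adapt the classical characterization of \cite{AS16}-style arguments (in the spirit of the Ergin--Sönmez equivalence for the original Boston mechanism) to the two affirmative-action variants. For each $n$ the random market $\tilde{\Gamma}^n$ is finite, so the claim reduces to a purely static statement: under either algorithm the set of Nash equilibrium outcomes of the preference-revelation game with truthful payoffs $\tilde{P}^n$ coincides with the set of matchings that cannot be blocked in the appropriate sense. I will prove parts (1) and (2) in parallel, because the argument is structurally identical; the only differences are (a) which admission rule the school uses inside a single step and (b) which blocking conditions appear in the Q-stability vs.\ R-stability definitions.

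\textbf{Direction 1: every Q-stable (resp.\ R-stable) matching is a Nash equilibrium outcome.} Fix $\mu \in \xi^q(\tilde{P}^n)$ and define the strategy profile $P^*$ by letting each student $s$ report only $\mu(s)$ as acceptable (and only the empty preference if $\mu(s)=s$). In Step 1 of the IAM-Q, every school $c$ receives exactly the applicant set $\mu(c)$, which already respects $q_c$ and $q_c^M$, so each student is admitted to her reported school and the algorithm outputs $\mu$. To verify that no student $s$ can profitably deviate to some $P'_s$, I track what happens under $(P'_s,P^*_{-s})$: in Step 1 only $s$'s top listed school $c$ sees a different applicant pool, namely $\mu(c)\cup\{s\}$. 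If $c\,\tilde{P}_s^n\,\mu(s)$, then Q-stability of $\mu$ forces the school's admission rule to reject $s$ (otherwise one of the three blocking clauses would apply); and since every other school $c'$ that $s$ might reach in a later step is full with students who listed $c'$ first and were permanently accepted in Step 1, $s$ cannot obtain any school strictly better than $\mu(s)$ and cannot re-obtain $\mu(s)$ once she skips it. The R-stable case is symmetric, using the IAM-R admission rule and the R-stability blocking clauses.

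\textbf{Direction 2: every Nash equilibrium outcome is Q-stable (resp.\ R-stable).} Let $P^*$ be a Nash equilibrium of $G_{\text{IAM-Q}}(\tilde{\Gamma}^n)$ and put $\mu = \text{IAM-Q}(P^*)$. Individual rationality $\mu(s)\,\tilde{R}_s^n\,s$ is immediate: a student with an unacceptable assignment could deviate to the empty list and remain unmatched. Suppose for contradiction that $(s,c)$ blocks $\mu$, so $c\,\tilde{P}_s^n\,\mu(s)$ and one of the three Q-stability clauses holds. Consider the deviation $P'_s = (c)$. In the resulting IAM-Q run, the Step 1 applicant set at $c$ becomes $A_c^1 \cup \{s\}$, where $A_c^1\subseteq \mu(c)$ is the set of original Step 1 admits to $c$. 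I then walk through the three blocking cases to show $s$ is admitted in Step 1 of $c$: in clause (i), $s$ is a minority so the majority-quota constraint is not activated and the priority comparison $s\succ_c s'$ with $s'\in\mu(c)$ guarantees admission; in clause (ii), the strict inequality $|\mu(c)\cap S^M|<q_c^M$ propagates to $|A_c^1\cap S^M|<q_c^M$, so the quota still has room when $s$ is reached; in clause (iii), $s\succ_c s'$ for some majority $s'\in\mu(c)$ implies $s$ is considered before $s'$, so the majority count is still below $q_c^M$ at the moment $s$ is evaluated. In every case $s$ ends up at $c$, contradicting that $P^*$ is a Nash equilibrium. The R-stable argument is the exact analogue with the IAM-R admission rule: the three R-stability clauses are tailored precisely so that a unilateral deviation putting $c$ first succeeds whenever $(s,c)$ blocks.

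\textbf{Main obstacle.} The delicate part is verifying case-by-case in Direction~2 that the deviation actually secures the blocking school for $s$ under the specific within-step admission procedure (first reserve-filling for minorities, then joint processing by priority with a cap on majorities). One has to track both the total-capacity constraint and the type-specific constraint simultaneously, and check that when a clause of the blocking definition holds, the corresponding constraint still has slack at the moment $s$ is processed in Step 1. This is essentially bookkeeping, but it is where the definitions of Q-stability and R-stability have been designed to exactly match the respective admission rules, so the clauses line up one-to-one with the three ways $s$ can enter $c$ in Step 1 of the deviated game.
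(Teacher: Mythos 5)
Your overall architecture is the same as the paper's. For part (2) the paper proves exactly your two directions: ``Nash outcome $\Rightarrow$ R-stable'' via the deviation that moves the blocking school to the top of the submitted list, and ``R-stable $\Rightarrow$ Nash outcome'' via the profile in which every student ranks her assigned school first. For part (1) the paper does not re-run the argument at all: it invokes Theorem 3 of \cite{ES06} for the IAM-Q, whereas you prove both parts in parallel. That difference is harmless; the problems are in the bookkeeping of your Direction 2, which is precisely the place you yourself flag as the ``main obstacle.''

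Two concrete gaps. First, your case analysis assumes a blocking pair must satisfy ``one of the three Q-stability clauses,'' but blocking in this paper can also occur through the vacancy clause ($|\mu(c)| < q_c$ and $s$ acceptable to $c$), which the paper's own proof of (2.i) explicitly carries along. Omitting it leaves the conclusion ``$\mu$ is Q-stable (resp.\ R-stable)'' unproven, and under the IAM-Q this is the genuinely delicate case: if $s$ is a majority student with $|\mu(c)| < q_c$ but $|\mu(c) \cap S^M| = q_c^M$, the deviation $P'_s = (c)$ does \emph{not} secure $c$ for $s$ --- the quota binds even though seats are empty --- so this case must be treated (or the vacancy clause read in its type-refined form) rather than skipped. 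Second, your description of the deviated run is off: the Step-1 applicant pool at $c$ under $(P'_s, P^*_{-s})$ is the set of \emph{original Step-1 applicants} plus $s$, not the set of original Step-1 \emph{admits} plus $s$; moreover the blocking clauses refer to the final assignment $\mu(c)$, which can contain students admitted to $c$ in later steps. Your clause-(iii) argument ``$s \succ_c s'$ implies $s$ is considered before $s'$'' silently assumes that $s' \in \mu(c) \cap S^M$ applied to $c$ in Step 1, which need not hold. The repair is the permanence property of the IAM: if, at the moment $s$ is processed in the deviated Step 1, the capacity (resp.\ majority quota) of $c$ were already exhausted, it would be exhausted by students ranked above $s$ who are admitted identically in the original run, so no student (resp.\ no majority student) could enter $c$ at any later step; hence every member of $\mu(c)$ (resp.\ of $\mu(c) \cap S^M$) would have priority over $s$, contradicting the blocking clause. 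With the vacancy clause added and this permanence argument spelled out, your proof coincides with the paper's.
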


\begin{proof}
(1) \,The market-wise equivalence between the set of Nash equilibrium outcomes of the IAM-Q and the set of Q-stable matchings under the truthful preferences in each random market of size $n$, has been given by Theorem 3 of \cite{ES06}. Thus, we only need to prove the second part.

(2.i) \,Given a regular random market $\tilde{\Gamma}^n$ of size $n$, $n=1,2,\dots$, and the corresponding preference revelation game of IAM-R, let $P'$ be an arbitrary strategy profile and matching $\mu$ be its associated outcome. Suppose that $\mu$ is not R-stable under the truthful preference profile $\tilde{P}^n$, we can thus find a pair of student and school $(s,c) \in (S^n,c^n)$ such that $c \tilde{P}^n_s \mu(s)$, and either $s \succ_c s'$ for some $s' \in \mu(c)$, or $|\mu(c)| < q_c$ and $s$ is acceptable to $c$. This implies that $c$ is not at the top in $P'_s$, because otherwise student $s$ would have been assigned to school $c$. Let $P''_s$ be an alternative preference order of $s$ in which $c$ is positioned as her first choice. Clearly, $s$ will be assigned to $c$ under the strategy profile $(P''_s, P'_{-s})$, where $P'_{-s} = (P'_{i})_{i \in S^n \backslash s}$. Thus, $P'$ is not a Nash equilibrium, as $P''_s$ offers a profitable deviation for student $s$ at $P'$ given that $c \tilde{P}^n_s \mu(s)$. Also, since $P'$ is arbitrarily chosen, the non R-stable matching $\mu$ cannot be obtained in the set of Nash equilibrium outcomes.

(2.ii) \,Let $\mu$ be a R-stable matching under $\tilde{P}^n$ in the regular random market $\tilde{\Gamma}^n$. We show that there exists a Nash equilibrium $P^*$, such that its associated outcome is $\mu$. For each student $s\in S^n$, let $P^*_s$ be the preference order of student $s$ such that school $\mu(s)$ is positioned at the top, i.e., $\mu(s) P^*_s c'$, $\forall c'\in C^n \backslash \mu(s)$. Thus, at $P^*$ the IAM-R will terminate at Step 1 and assign each student $s$ to $\mu(s)$. To show that $P^*$ is a Nash equilibrium, consider a pair of student and school $(s,c)$ such that $c P^*_s \mu(s)$. As $\mu$ is R-stable, we know that $|\mu(c)| = q_c$ and each student who is matched with school $c$ under $\mu$ is more preferred to $s$; also, for each $s' \in \mu(c)$, $\mu(c)$ is her top ranked school at $P^*$. Thus, $s$ cannot be matched to $c$ by misreporting her preferences. Since $s$ is arbitrarily chosen, the preceding argument suffices the non-existence of profitable deviations at $P^*$. We conclude that $P^*$ is a Nash equilibrium with the R-stable matching $\mu$ as its associated Nash equilibrium outcome.
\end{proof}


We now present our main argument on the asymptotic performance of the IAM with affirmative actions.

\begin{prop} \label{prop:iamnash}
The sets of Nash equilibrium outcomes of the IAM-Q and its corresponding IAM-R are outcome equivalent in large markets.
\end{prop}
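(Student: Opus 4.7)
The plan is to chain the two lemmas immediately preceding the proposition. First, I would invoke Lemma \ref{lem3} to reduce the question of equivalence of Nash equilibrium outcome sets to the question of equivalence of the stable matching sets under truthful preferences: in every $\tilde{\Gamma}^n$, the set of Nash equilibrium outcomes of the IAM-Q coincides with $\xi^q(\tilde{P}^n)$, and that of the IAM-R coincides with $\xi^r(\tilde{P}^n)$. This reframes the claim in terms of purely combinatorial objects $\xi^q(\tilde{P}^n)$ and $\xi^r(\tilde{P}^n)$ generated by the random preference profile, with no further reference to equilibrium reasoning.

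Second, I would apply the unnumbered lemma established just before, which asserts that $\Pr[\xi^q(\tilde{P}^n) = \xi^r(\tilde{P}^n)] \to 1$ as $n\to\infty$. Combined with the previous step, this immediately yields that the two Nash equilibrium outcome sets coincide with probability tending to one. To connect this to Definition \ref{def:large}, I would observe that whenever the two equilibrium sets coincide, any pairwise selection $(\mu^Q,\mu^R)$ with $\mu^Q$ a Nash outcome of IAM-Q and $\mu^R$ a Nash outcome of IAM-R that coincide as matchings trivially assigns to every school $c\in C^n$ the same set of matched students. Consequently, the school-wise probability of disagreement $\eta_c(\tilde{\Gamma}^n;\text{IAM-Q},\text{IAM-R})$ is bounded above by the vanishing probability $\Pr[\xi^q(\tilde{P}^n)\ne \xi^r(\tilde{P}^n)]$, uniformly in $c$, so the $\max_{c\in C^n}\eta_c(\tilde{\Gamma}^n;\cdot,\cdot)\to 0$ requirement is satisfied.

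The main obstacle I anticipate is conceptual rather than technical: Definition \ref{def:large} is phrased for two fixed mechanisms producing single outcomes, whereas a manipulable mechanism like the IAM delivers a \emph{set} of Nash equilibrium outcomes in each market. The cleanest way to handle this is to interpret asymptotic equivalence of equilibrium sets as the statement that the set equality $\xi^q(\tilde{P}^n) = \xi^r(\tilde{P}^n)$ holds with probability tending to one, which is exactly what Lemma \ref{lem3} together with the unnumbered lemma delivers. All the substantive probabilistic content — bounding the event that two distinct students list the same school with positive reserves, via Proposition \ref{prop:iamtrue} — has already been absorbed into the unnumbered lemma, so the present proposition collapses to an essentially bookkeeping argument assembling these two ingredients.
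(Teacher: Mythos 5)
Your proposal is correct and matches the paper's own reasoning exactly: the paper states Proposition \ref{prop:iamnash} with no separate proof, treating it as the immediate consequence of Lemma \ref{lem3} (Nash equilibrium outcome sets equal the Q-stable and R-stable sets market-by-market) combined with the preceding lemma ($\Pr[\xi^q(\tilde{P}^n) = \xi^r(\tilde{P}^n)] \to 1$), which is precisely the chaining you describe. Your additional remark on reconciling the set-valued equilibrium outcomes with Definition \ref{def:large} is a sensible piece of bookkeeping that the paper leaves implicit.
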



Proposition \ref{prop:iamnash} implies that there is no need to distinguish these two affirmative actions under the IAM when all students are playing equilibrium strategies in large markets. Nevertheless, from both empirical and experimental evidence \citep{PS08,FN16}, we know that students are not necessarily playing their best responses under the IAM due to cognitive issues or coordination failures with other students. Such off-equilibrium behavior could be further exaggerated by the potentially large equilibria set when the market contains sufficiently many students and schools \citep{CJK18}. Thus, compared to the asymptotic outcome equivalence under the SOSM with these two affirmative actions \citep{Liu22da}, our asymptotically equivalent Nash equilibrium outcomes under the IAM could be less robust in practice, as truthful reporting is a dominant strategy for all students under the SOSM with either of these two affirmative actions \citep{AS03,HYY13}.


\section{Asymptotic Outcome Non-equivalence under the TTCM}    
\label{ssec:ttc}                          

For each market $\Gamma =(S, C, P, \succ, (q^M, r^m))$, the \emph{top trading cycles mechanism (TTCM) with affirmative actions} algorithm, which is based on the original top trading cycles algorithm proposed by \cite{SS74}, runs as follows:
\begin{itemize}

\item[] \textbf{Step 1}: Start with a matching in which no student is matched. For each school $c$, set its capacity counter at $q_c$. If $c$ has a majority quota, set its \emph{quota counter} at its majority quota $q_c^M$; if $c$ has a corresponding minority reserve, set its \emph{reserve counter} at its minority reserve $r_c^m$. If the reserve counter of school $c$ is positive, then it points to its most preferred minority student; otherwise it points to its most preferred student. Each student $s$ points to her most preferred \emph{acceptable} school that still has a seat for her, and otherwise points to herself; that is, an acceptable school $c$ whose capacity counter is strictly positive and, if $s \in S^M$, its quota counter is strictly positive. There exists at least one cycle (if a student points to herself, it is regarded as a cycle). Every student in a cycle is assigned a seat at the school she points to (if she points to herself, then she gets her outside option) and is removed. The capacity counter of each school in a cycle is reduced by one and, if: (i) the assigned student $s$ is a majority student and the school matched to $s$ has a majority quota, then reduces the quota counter of the matched school by one; (ii) the assigned student $s$ is a minority student and the school matched to $s$ has a minority reserve, then reduces the reserve counter of the matched school by one. If no student remains, terminate. Otherwise, proceed to the next step.


\item[] \textbf{Step l}: Start with the matching and counter profile reached at the end of Step $l-1$. For each remaining school $c$, if its reserve counter is positive, then $c$ points to its most preferred minority student among all remaining minority students; otherwise it points to its most preferred student among all remaining students. Each remaining student $s$ points to her most preferred \emph{acceptable} school that still has a seat for her, and otherwise points to herself; that is, an acceptable school $c$ whose capacity counter is strictly positive and, if $s \in S^M$, its quota counter is strictly positive. There exists at least one cycle (if a student points to herself, it is regarded as a cycle). Every student in a cycle is assigned a seat at the school she points to (if she points to herself, then she gets her outside option) and is removed. The capacity counter of each school in a cycle is reduced by one and, if: (i) the assigned student $s$ is a majority student and the school matched to $s$ has a majority quota, then reduces the quota counter of the matched school by one; (ii) the assigned student $s$ is a minority student and the school matched to $s$ has a minority reserve, then reduces the reserve counter of the matched school by one. If no student remains, terminate. Otherwise, proceed to the next step.

\end{itemize}

The algorithm terminates in a finite number of steps since there is at least one student matched and removed in any step of the algorithm. For a market $\Gamma$, if $r_c^m = 0$, $\forall c \in C$, i.e., a market with only majority quota, then the above algorithm reduces to the \emph{top trading cycles mechanism with majority quota} (TTCM-Q, henceforth) proposed by \cite{AS03}. Accordingly, if $q_c^M = q_c$, $\forall c \in C$, i.e., a market with only minority reserve, then the above algorithm reduces to the \emph{top trading cycles mechanism with minority reserve} (TTCM-R, henceforth) proposed by \cite{HYY13}. 


The following proposition presents our main argument on the asymptotic performance of the TTCM with affirmative actions. It implies that the majority quota and its corresponding minority reserve generate different matching outcomes with non-negligible probability under the TTCM, even in arbitrarily large markets with sufficiently many schools and a relatively slow growth of reserved seats. 

\begin{prop} \label{prop:ttc}

The TTCM-Q and its corresponding TTCM-R are not outcome equivalent in large markets.

\end{prop}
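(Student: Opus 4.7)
The plan is to prove non-equivalence by explicit construction: exhibit a regular sequence $(\tilde{\Gamma}^n)$ of random markets on which $\max_c \eta_c(\tilde{\Gamma}^n;\text{TTCM-Q},\text{TTCM-R})$ stays bounded away from zero. The core idea is to plant a small deterministic configuration inside every $\tilde{\Gamma}^n$ at which TTCM-Q fires a long trading cycle while TTCM-R short-circuits that cycle via the reserve, producing provably different matched sets at some school; what remains is to verify that the planted configuration is realized by the random preferences with probability bounded below uniformly in $n$.

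I would begin with the deterministic counterexample. Take two schools $c_1, c_2$ with $q_{c_1}=2$, $q_{c_1}^M=1$ (hence $r_{c_1}^m=1$ under TTCM-R) and $q_{c_2}=1$, $q_{c_2}^M=1$, and three students $s_1^M, s_3^M \in S^M$ and $s_2^m \in S^m$, with priorities $s_3 \succ_{c_1} s_1 \succ_{c_1} s_2$ and $s_2 \succ_{c_2} s_1 \succ_{c_2} s_3$, and preferences $s_1: c_2, c_1$; $s_2: c_1, c_2$; $s_3: c_2, c_1$. A direct trace of the two algorithms shows that under TTCM-Q the step-one cycle $c_1 \to s_3 \to c_2 \to s_2 \to c_1$ fires, placing $s_3$ at $c_2$ and $s_2$ at $c_1$, with $s_1$ subsequently placed at $c_1$; whereas under TTCM-R the reserve at $c_1$ makes $c_1$ point to $s_2$ rather than $s_3$, so only the short cycle $c_1 \to s_2 \to c_1$ fires at step one, after which $c_2$ points to its next-priority student $s_1$, giving $s_1 \to c_2$ and finally $s_3 \to c_1$. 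The set matched to $c_2$ is thus $\{s_3\}$ under TTCM-Q versus $\{s_1\}$ under TTCM-R.

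Next, I would embed this configuration into each $\tilde{\Gamma}^n$ by designating two schools and three students of the appropriate types to play the roles of $c_1, c_2, s_1, s_2, s_3$ with the above capacities, reserves, and priorities. Setting $|r^{m,n}|=1$ (so Condition (4) of Definition \ref{def:regular} holds with $a=0$) and choosing uniform popularities (so Condition (5) holds with $r=1$), with the remaining schools and students filled in arbitrarily subject to the other regularity conditions, the sequence is regular. Let $E_n$ denote the event that the three designated students have the prescribed top-choice structure and no other student lists either designated school; on $E_n$ the planted sub-market is causally isolated from the remainder of the market, so the deterministic trace applies verbatim and $\eta_{c_2^n}(\tilde{\Gamma}^n) \ge P(E_n)$.

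The principal obstacle is showing that $P(E_n)$ does not vanish. Each of the three preference conditions has probability $\Theta(1/n)$ under the uniform popularities, so the single-configuration estimate gives $P(E_n) = \Theta(1/n^3) \to 0$, which is insufficient by itself. To close the gap I would enlarge the bad event by taking the union over all admissible trading-partner schools (of which there are $\Theta(n)$) and over all admissible student triples consistent with the planted priorities. A careful inclusion--exclusion bound on this union, combined with a Poisson-type conditional estimate showing that at least one of the configurations retains the isolation property with constant conditional probability, should yield $\max_c \eta_c \ge \varepsilon$ for infinitely many $n$. Controlling the inclusion--exclusion dependencies and the conditional isolation simultaneously is the most delicate step of the argument; once completed, it establishes that TTCM-Q and TTCM-R are not outcome equivalent in large markets.
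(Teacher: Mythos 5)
Your deterministic sub-market is correct --- a full trace confirms that $c_2$ is matched to $\{s_3\}$ under TTCM-Q and to $\{s_1\}$ under TTCM-R --- and your regularity check is fine; in fact your example is traced more explicitly than the paper's own (which delegates the final algorithmic step to unspecified positive conditional probabilities $\pi_1,\pi_2$). The genuine gap is exactly where you flag it: the probability lower bound, which is the entire analytic content of the proposition, is never established. Planting designated individuals gives $P(E_n)=O(1/n^3)\to 0$, and the inclusion--exclusion repair is only asserted (``should yield''). Worse, as sketched the repair has two unaddressed pitfalls. First, Definition \ref{def:large} requires that $\max_{c}\eta_c$ fail to converge to zero, i.e., a lower bound on a \emph{single} school's probability of being matched differently; if your union ranges over configurations in which the school witnessing the difference varies, you only bound the probability that \emph{some} school differs, which is compatible with $\max_c\eta_c = O(1/n)\to 0$. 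The union must be anchored at one fixed detector school, with only the partner school and the student triple varying. Second, priorities are deterministic components of the market, so the priority orders at the designated schools must be fixed \emph{in advance} compatibly with every triple appearing in the union (for instance, take $\succ_{c_2}$ to be the reverse of $\succ_{c_1}$ with all minority students at the bottom of $\succ_{c_1}$); your sketch instead fixes priorities for one triple and then unions over ``triples consistent with the planted priorities'' without showing that a positive fraction of triples is consistent.

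The paper closes this gap without any inclusion--exclusion by formulating the events as \emph{counts} rather than as statements about designated individuals --- ``exactly two minority students rank $c_1$ first,'' ``exactly one school of capacity two with reserve one ranks $s_1$ first,'' and so on --- and by drawing priorities i.i.d.\ uniformly. Each such event has probability converging to a positive constant via the limit $(1-1/n)^{n}\to e^{-1}$ (the paper obtains $(t_n)^2/e$, $\delta_n/e^2$, $1/e^2$, and $1/6$ for its four nested events), precisely because the sum over the identities of the participants is built into the event itself; this is the disjoint-union computation you are gesturing at, done cleanly, and it anchors everything at the single school $c_1$ so that $\max_c \eta_c$ is bounded below. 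If you rewrite your $E_n$ in this count form --- exactly two majority students rank the fixed detector school first, at least one minority student ranks some capacity-two, quota-one partner school that no one else lists, and the pre-arranged reverse priorities order the realized triple correctly --- your argument closes and coincides with the paper's in all but notation.
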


\begin{proof}
See Appendix \ref{app:ttc}.
\end{proof}



Such distinct asymptotic performance of the TTCM compared to its IAM counterparts essentially comes from the priority trade nature of the TTCM. As illustrated in the proof in Appendix \ref{app:ttc}, blocking the possible priority trades under the TTCM with affirmative actions requires that it is very unlikely for any two different students to list the same school \emph{without} reserved seats in a sequence of random markets of arbitrary sizes (i.e., school $c_1$ in Event 1 when $n \ge 4$, or school $c_2$ in the $2 \le n< 4$ case). This cannot be satisfied even under our relatively restricted regularity conditions of Definition \ref{def:regular}. By contrast, the outcome convergence process of these two affirmative actions under the IAM (of Proposition \ref{prop:iamnash}) only demands that no two different students (either majority or minority) will list the same school \emph{with} nonzero reserved seats with a high probability.


Also, since the purpose of imposing affirmative actions in school choice markets is to improve the matching outcomes (i.e., welfare) of minority students, the asymptotically non-equivalent TTCM-Q and its corresponding TTCM-R clearly induce an ambiguous Pareto dominance relationship for minorities in large markets.\ft{To see this, recall the two minority students $s_1$ and $s_2$ in the $n \ge 4$ case of the proof of Proposition \ref{prop:ttc} in Appendix \ref{app:ttc}: the welfare improvement of $s_1$ (when $s_1$ is assigned to $c_1$, with conditional probability $\pi_1 >0$) under the the TTCM-Q is clearly at the expense of $s_2$ (when $s_2$ is assigned to $c_1$, with conditional probability $\pi_2 >0$) under the TTCM-R, and vice versa.} We thus have the following corollary of Proposition \ref{prop:ttc}.

\begin{cor} \label{cor:welfare}

There exists no Pareto dominance relationship for minorities between the TTCM-Q and its corresponding TTCM-R in the sequence of random markets $(\tilde{\Gamma}^1, \tilde{\Gamma}^2,\dots)$.

\end{cor}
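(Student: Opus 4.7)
The plan is to derive the corollary directly from the specific realizations already constructed in the proof of Proposition~\ref{prop:ttc}, rather than building any new probabilistic apparatus. Recalling the definition given in Section~\ref{ssec:predef}, for TTCM-Q to Pareto dominate TTCM-R for minorities over the sequence $(\tilde{\Gamma}^1, \tilde{\Gamma}^2,\dots)$, \emph{every} realized preference profile in \emph{every} $\tilde{\Gamma}^n$ would have to induce outcomes that coincide on the minority set or strictly favour the quota version for at least one minority while weakly favouring it for all the others; and symmetrically for the reverse direction. It therefore suffices to exhibit two kinds of realization: one in which some minority student strictly prefers her TTCM-Q assignment to her TTCM-R one, and one in which some (possibly different) minority student strictly prefers her TTCM-R assignment to her TTCM-Q one.

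Both realizations are supplied by Event~1 (for $n \ge 4$) in the proof of Proposition~\ref{prop:ttc}. First I would isolate the sub-event on which the two distinguished minority students $s_1$ and $s_2$ both rank $c_1$ at the top of their preferences, so that under TTCM-Q the priority-trade cycle including $c_1$ is resolved by the raw priority order (giving $c_1$ to $s_1$ with conditional probability $\pi_1>0$), whereas under TTCM-R the reserve counter first points $c_1$ to its top-priority minority applicant $s_2$, displacing $s_1$ to a strictly less-preferred school listed later in her $k$-school preference. This furnishes a set of realizations of positive unconditional probability (in every sufficiently large $\tilde{\Gamma}^n$) on which TTCM-Q strictly Pareto dominates TTCM-R for $s_1$, so TTCM-R cannot Pareto dominate TTCM-Q for minorities.

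Symmetrically, on the complementary sub-event of Event~1 where the priority trade under TTCM-Q places a majority student into $c_1$ but TTCM-R reroutes $c_1$ to $s_2$ with conditional probability $\pi_2>0$, student $s_2$ strictly prefers her TTCM-R assignment to her TTCM-Q assignment. This rules out the reverse dominance: TTCM-Q cannot Pareto dominate TTCM-R for minorities. Combining both directions yields the corollary.

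The main obstacle — really the only subtlety — is to make sure that the two sub-events I invoke are genuinely present in the same argument underlying Proposition~\ref{prop:ttc}, and that the conditional probabilities $\pi_1, \pi_2 > 0$ established there immediately lift to positive unconditional probability in every $\tilde{\Gamma}^n$ beyond some threshold. Since $\pi_1$ and $\pi_2$ are strictly positive conditional on an event whose unconditional probability is itself bounded below for all sufficiently large $n$ (as shown in the appendix), no additional estimation is needed; the corollary then follows from the definition of Pareto dominance for minorities with no further work.
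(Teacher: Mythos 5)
Your overall route is exactly the paper's own: the paper establishes this corollary by pointing back to the $n \ge 4$ construction in the proof of Proposition~\ref{prop:ttc} and noting that the welfare gain of the minority student $s_1$ under TTCM-Q (conditional probability $\pi_1>0$ of receiving $c_1$) comes at the expense of the minority student $s_2$ under TTCM-R (conditional probability $\pi_2>0$), and vice versa, so neither direction of Pareto dominance for minorities can hold. Your skeleton --- exhibit positive-probability realizations in which some minority strictly prefers her TTCM-Q assignment and realizations in which some minority strictly prefers her TTCM-R assignment, then invoke the definition --- is that same argument, and it goes through because you import $\pi_1,\pi_2>0$ from the appendix rather than re-deriving them.

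However, your account of \emph{why} the two mechanisms split $s_1$ and $s_2$ misreads the construction in three concrete ways. First, $c_1$ carries no affirmative action at all ($q_{c_1}=1$, with neither quota nor reserve); the policy variation sits at $c_2$, and the divergence at $c_1$ arises only indirectly, through which trading cycle containing $c_1$ clears. So the step ``under TTCM-R the reserve counter first points $c_1$ to its top-priority minority applicant $s_2$'' is one the algorithm never performs, and an argument resting on it would fail. Second, the construction fixes $k=1$, so a student who fails to obtain $c_1$ is unmatched, not ``displaced to a strictly less-preferred school listed later in her $k$-school preference''; the strict welfare comparison survives only because being unmatched is strictly worse than any school. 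Third, under Event 1 only $s_1$ and $s_2$ list $c_1$, so no majority student can ever be matched to $c_1$; your ``complementary sub-event'' in which TTCM-Q places a majority student into $c_1$ is empty. A smaller imprecision: $\pi_1$ and $\pi_2$ are conditional on Events 1--4 jointly, not on Event 1 alone. None of this overturns your conclusion, since the facts you actually rely on are those proved in the appendix, but the mechanism narrative needs to be rewritten to match the construction before the proof reads correctly.
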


\section{Conclusion} 
\label{sec:conclusion}

This paper studies the asymptotic performance of two celebrated matching mechanisms, the immediate acceptance mechanism (IAM) and the top trading cycles mechanism (TTCM), in the context of school choice with affirmative actions. Different from most extant studies on large matching markets, we make a clear distinction on the asymptotic performance of the IAM and the TTCM with affirmative actions. Given the substantial political, administrative and cognitive costs in the selection process of affirmative action policies, our results provide guidance to policymakers regarding the cost-effectiveness of the IAM over its TTCM counterpart in large school choice markets with affirmative actions. Last, since we can treat affirmative actions as a generic type-specific constraint which is not limited in the context of school choice, future research can also work on identifying the asymptotic performance of conventional matching mechanisms in other markets with a large number of participants and type-specific constraints (e.g., elite college admissions, Covid-19 vaccine allocations, refugee resettlement, among others), and designing new mechanisms to improve the resource allocation effectiveness in these matching markets.

\section*{Acknowledgements}

This research is supported by the Swiss National Science Foundation (No. 100018$\_$192583), and the National Natural Science Foundation of China (No. 72173072). 

\newpage

\appendix           
\addappheadtotoc

\section{Appendix}
\label{app}

\subsection{Examples} 
\label{app:example}

\begin{ex} \label{ex1} \normalfont
\emph{(Students' different misreporting strategies under the IAM-Q and the IAM-R.)} Consider the following market $\Gamma= (S, C, P, \succ, (q^M, r^m))$ with three schools $C= \{c_1,c_2,c_3\}$, and five students $S = \{s_1, s_2, s_3,s_4.s_5\}$, where $S^M = \{s_1,s_2,s_3\}$ and $S^m = \{s_4,s_5\}$. $q_{c_1} = q_{c_3} = 1$, and $q_{c_2} = 3$. Schools and students have the following priority and preference orders:
\begin{table}[!ht]
\centering
\begin{tabular}{C{1.5cm} C{1.5cm}| C{1.5cm} C{1.5cm} | C{1cm} C{1cm} | C{1cm}}
$\succ_{c_k,\, k =1,2}$ & $\succ_{c_3}$ & $P_{s_i, \; i =1,2,3,4}$ & $P_{s_5}$ & $P'_{s_2}$ & $P'_{s_5}$ & $P''_{s_3}$\\ [0.5ex] 
\hline
$s_1$ & $s_4$ & $c_2$  & $c_1$  & $c_1$ &  $c_3$ & $c_3$\\
$s_2$ & $s_5$ & $c_1$  & $c_3$  &       &        &       \\
$s_3$ & $s_1$ & $c_3$  & $c_2$  &       &        &      \\
$s_4$ & $s_2$ &        &        &       &        &      \\
$s_5$ & $s_3$ &        &        &       &        &      \\ [0.5ex]
\end{tabular}
\end{table}

Suppose that $\Gamma$ has the following majority quota and its corresponding minority reserve: $(q_{c_1}^M, q_{c_2}^M, q_{c_3}^M) = (1,1,1)$, or correspondingly, $(r_{c_1}^m,r_{c_2}^m,r_{c_3}^m) =(0,2,0)$. When all students report their truthful preferences $P_{s_i}$, $i =1,\dots,5$, the IAM-Q and its IAM-R counterpart produce different matching outcomes as:
\begin{equation*}
f^{IAM-Q} (\Gamma) = \left(
\begin{array}{ccc}
c_1 & c_2 & c_3 \\
s_5 & \{s_1, s_4\} & s_2
\end{array} \right)\qquad\qquad
f^{IAM-R} (\Gamma) = \left(
\begin{array}{ccc}
c_1 & c_2 & c_3 \\
s_5 & \{s_1, s_2, s_4\} & s_3
\end{array} \right)
\end{equation*}
which leave $s_3$ unmatched under the IAM-Q.

Different from the strategy-proof TTCM-Q and TTCM-R \citep{AS03,HYY13}, students can benefit from manipulating their reported preferences under the IAM with either of these two affirmative actions. Consider the profitable deviations of the majority student $s_2$ and the minority student $s_5$ by reporting $P'_{s_2}$ and $P'_{s_5}$ under the IAM-Q. Given $\Gamma' = (S, C, P', \succ, (q^M, r^m))$, with $P' = (P_{s_1},P'_{s_2},P_{s_3},P_{s_4},P'_{s_5})$, the IAM-Q produces 
\begin{equation*}
f^{IAM-Q} (\Gamma') = \left(
\begin{array}{ccc}
c_1 & c_2 & c_3 \\
s_2 & \{s_1, s_4\} & s_5
\end{array} \right)
\end{equation*}
which unilaterally improves the matching outcome of the majority student $s_2$ compared to her matching outcome with the sincere preference order $P_{s_2}$. Note that although $s_5$ is strictly worse-off under $\Gamma'$ compared to her matching outcome under $\Gamma$, she will bear further welfare loss if she insists on reporting her truthful preference $P_{s_5}$. To see this, let the unmatched majority student $s_3$ submit $P''_{s_3}$, while $s_2$ still submits her strategic preference order $P'_{s_2}$. Denote $\hat{\Gamma} = (S, C, \hat{P}, \succ, (q^M, r^m))$, with $\hat{P} = (P_{s_1},P'_{s_2},P''_{s_3},P_{s_4},P_{s_5})$, the IAM-Q thus produces 
\begin{equation*}
f^{IAM-Q} (\hat{\Gamma}) = \left(
\begin{array}{ccc}
c_1 & c_2 & c_3 \\
s_2 & \{s_1, s_4, s_5\} & s_3
\end{array} \right)
\end{equation*}
in other words, $s_5$ will be hurt from not behaving strategically under the IAM-Q. Also, note that no student has further profitable deviations from reporting $P'$.

Accordingly, the unique equilibrium matching outcome under the IAM-R is
\begin{equation*}
f^{IAM-R} (\Gamma'') = \left(
\begin{array}{ccc}
c_1 & c_2 & c_3 \\
s_3 & \{s_1, s_2, s_4\} & s_5
\end{array} \right)
\end{equation*}
where $\Gamma'' = (S, C, P'', \succ, (q^M, r^m))$, with $P'' = (P_{s_1},P_{s_2},P''_{s_3},P_{s_4},P_{s_5})$; i.e., only $s_3$ has a profitable deviation from reporting her truthful preference order $P_{s_3}$ under the IAM-R.

\end{ex} 

\subsection{Proof of Proposition \ref{prop:ttc}} 
\label{app:ttc}

Consider a sequence of random markets $(\tilde{\Gamma}^1, \tilde{\Gamma}^2,\dots)$, where there are $n$ schools and $\lambda n$ students,  $\lambda \ge 1$, in each random market $\tilde{\Gamma}^n$. Assume that the preferences of all students are generated according to the preference generation procedure defined in Section \ref{ssec:large}, with uniform distribution over all schools and preference length $k = 1$. Also, assume that school priorities are drawn identically and independently from the uniform distribution over students such that all students are acceptable. For each random market $\tilde{\Gamma}^n$, denote $t_n \in (0,1)$ the portion of minority students, while $1 - t_n$ the corresponding portion of majority students. Also, assume that $q_c = 1$ or $2$ for every school $c$ in $\tilde{\Gamma}^n$, denote $\delta_n \in (0,1)$ the portion of schools with two seats, while $1 - \delta_n$ the corresponding portion of schools with one seat. The preceding assumptions guarantee that the regularity conditions of Definition \ref{def:regular} are satisfied.

Let $p^n$ be the probability that the two affirmative actions produce different outcomes under the TTCM in market $\tilde{\Gamma}^n$. We will construct examples to show that the probability $p^n$ is strictly bounded away from zero in a sequence of random markets of different sizes $(\tilde{\Gamma}^1, \tilde{\Gamma}^2,\dots)$.

$p^1 >0$ is trivially satisfied when $\tilde{\Gamma}^1$ contains one majority student $s_1$ and one minority student $s_2$, while the exact school $c_1$ has one seat, $\delta_1 \in (0,1)$, and $s_1 \succ_{c_1} s_2$. For $2 \le n < 4$, it is a positive probability event that apart from other participants in $\tilde{\Gamma}^n$, there are two schools $c_1$ and $c_2$, and three students $s_1, s_2 \in S^{M,n}$, $s_3 \in S^{m,n}$, with the following priority and preference orders:
\begin{table}[!ht]
\centering
\begin{tabular}{C{1cm} C{1cm} | C{1cm} C{1cm}}
$\succ_{c_1}$ & $\succ_{c_2}$ &  $P_{s_1}$ & $P_{s_i, \; i =2, 3}$ \\ [0.5ex] 
\hline
$s_2$ & $s_1$ & $c_1$ & $c_2$  \\
$s_3$ & $s_2$ &       &        \\
$s_1$ & $s_3$ &       &        \\ [0.5ex]
\end{tabular}
\end{table}

Assume $q_{c_1} =2$ and $q_{c_2} =1$, with the following majority quota and its corresponding minority reserve: $(q_{c_1}^M,q_{c_2}^M) = (1,1)$ or correspondingly, $(r_{c_1}^m,r_{c_2}^m) =(1,0)$. The TTCM-Q and its TTCM-R counterpart produce different matching outcomes as:
\begin{equation*}
f^{TTCM-Q} (\Gamma) = \left(
\begin{array}{cc}
c_1 & c_2  \\
s_1 & s_2
\end{array} \right) \qquad\qquad
f^{TTCM-R} (\Gamma) = \left(
\begin{array}{cc}
c_1 & c_2  \\
s_1 & s_3
\end{array} \right)
\end{equation*}
i.e., the majority student $s_2$ is assigned to $c_2$ when $c_1$ has the majority quota $q_{c_1}^M =1$, while the minority student $s_3$ is assigned to $c_2$ when $c_1$ has the corresponding minority reserve $r_{c_1}^m =1$. This gives $p^n >0$, for each $n \ge 2$.

For $n \ge 4$. Let $\lambda = 1$. Denote $c_1$ an arbitrary school with no affirmative actions, $q_{c_1} =1$. Let Event 1 be the event that there are exactly two minority students, denoted by $s_1$ and $s_2$ respectively, rank $c_1$ first, $s_1, s_2 \in S^{m,n}$. The probability of Event 1 is
\[ 
\binom{ n \, t_n}{1} \times \binom{n \, t_n-1}{1} \times \frac{1}{n^2} \times \left(1- \frac{1}{n} \right)^{n-2},
\]
where $t_n \in (0,1)$ for any arbitrarily large $n \ge 4$. We can derive its limit when $n$ approaches $\infty$ as
\begin{align*}
\lim_{n \to \infty} \; \frac{n t_n (n t_n-1)}{n^2} \left(1- \frac{1}{n}\right)^{n-2} = & \lim_{n \to \infty} \; (t_n)^2 \times \left(1- \frac{1}{n}\right)^n \times \left(1- \frac{1}{n}\right)^{-2}\\ = & \; (t_n)^2 \times \frac{1}{e} \times 1 = \frac{(t_n)^2}{e}.
\end{align*}
Thus, for any sufficiently large $n$, the probability of Event 1 is at least, say, $\frac{(t_n)^2}{2e} > 0$.

Given Event 1, consider Event 2 such that except school $c_1$, there is exactly one school (denoted by $c_2$), $q_{c_2} =2$ with either a majority quota $q_{c_2}^M =1$ or its corresponding minority reserve policy $r_{c_2}^m=1$, lists $s_1$ over all the rest students in its priority order; also, there is exactly one school (denoted by $c_3$) lists $s_2$ first. The conditional probability of Event 2 is given by
\[
\binom{n \, \delta_n}{1} \times \binom{n-2}{1} \times \frac{1}{n^2} \times \left(1- \frac{2}{n} \right)^{n-3},
\]
where $\delta_n \in (0,1)$ for any arbitrarily large $n \ge 4$. The limit of the above expression is
\begin{align*}
\lim_{n \to \infty} \; \frac{n \delta_n (n-2)}{n^2} \left(1- \frac{2}{n}\right)^{n-3} = & \lim_{n \to \infty} \; \delta_n \times \left(1- \frac{2}{n}\right)^n \times \left(1- \frac{2}{n}\right)^{-3}\\ = & \; \delta_n \times \frac{1}{e^2} \times 1 = \frac{\delta_n}{e^2},
\end{align*}
as $n$ approaches $\infty$. Thus, for any sufficiently large $n$, the conditional probability of Event 2 given Event 1 is at least, say, $\frac{\delta_n}{2e^2} >0$.

Given Event 1 and 2, consider Event 3 such that except the two minority students $s_1$ and $s_2$, there is exactly one student (denoted by $s_3$) ranks $c_2$ first and exactly one student (denoted by $s_4$) ranks $c_3$ first, where $s_3, s_4 \in S^n$. The conditional probability of Event 3 is
\[
\binom{n-2}{1} \times \binom{n-3}{1} \times \frac{1}{(n-1)^2} \times \left(1- \frac{2}{n-1} \right)^{n-4}.
\]

Similarly, we can derive the limit of this expression as $\frac{1}{e^2}$, when $n \to \infty$. For any sufficiently large $n$, the conditional probability of Event 3 given Event 1 and 2 is at least, say, $\frac{1}{2e^2}$. 

Given Events 1, 2, and 3, let Event 4 be the event that apart from other students in $\tilde{\Gamma}^n$, $c_1$ ranks $s_3$ and $s_4$ higher than both $s_1$ and $s_2$. Since Events 1-3 do not impose any restrictions on the rankings of these four students in $c_1$'s priority order, the conditional probability of Event 4 is $\frac{1}{6}$. Note that given Events 1-4 and the assumption that $k = 1$ and $q_{c_1} =1$, the event that school $c_1$ is matched with $s_1$ or $s_2$ (under either the majority quota or its corresponding minority reserve) while being contained in a cycle involving participants other than $c_1$, $s_1$, and $s_2$ occurs with conditional probability 1. 

Given Events 1-4, let $\pi_1 >0$ be the conditional probability that school $c_1$ is matched with $s_1$ when $c_2$ has the majority quota $q_{c_2}^M =1$. The unconditional probability that $c_1$ is matched with $s_1$ when $q_{c_2}^M =1$, is thus at least $\frac{\pi_1 (t_n)^2 \delta_n}{48e^5} > 0$. Accordingly, let $\pi_2 >0$ be the conditional probability (given events 1-4) that school $c_1$ is matched with $s_2$ when $c_2$ has the corresponding minority reserve $r_{c_2}^m =1$. The unconditional probability that $c_1$ is matched with $s_2$ when $r_{c_2}^m =1$, is thus at least $\frac{\pi_2 (t_n)^2 \delta_n}{48e^5} > 0$.

Therefore, for any sufficiently large $n$, we cannot eliminate the probability that these two affirmative actions generate different matching outcomes under the TTCM in market $\tilde{\Gamma}^n$; i.e., there is an $\tilde{n}$ such that $p^n > 0$, for any $n \ge \tilde{n}$. This completes the proof.

\bigskip
\bibliographystyle{ecta} 


\end{document}